\newcommand{\junk}[1]{}
\DeclareMathAlphabet{\mathpzc}{OT1}{pzc}{m}{it}
\newtheorem{theorem}{Theorem}[section]
\newtheorem{lemma}[theorem]{Lemma}
\newtheorem{remark}[theorem]{Remark}
\newtheorem{definition}[theorem]{Definition}
\numberwithin{equation}{section}
\newcommand{\rev}{\mathcal{R}}
\newcommand{\ignore}[1]{}
\newcommand{\shortvers}[1]{}
\begin{document}

\title{Quasi-Proportional Mechanisms: Prior-free Revenue Maximization}
\author{Vahab Mirrokni
\thanks{Google Research, New York, NY. {\tt mirrokni@google.com}}
and
S. Muthukrishnan
\thanks{Google Research, New York, NY. {\tt muthu@google.com}}
and
Uri Nadav
\thanks{Tel-aviv University {\tt uri.nadav@gmail.com}. Part
of this work was done when the author was at Google.} }

\institute{}

\maketitle

\begin{abstract}
Inspired by Internet ad auction applications, we study the problem
of allocating a single item via an auction when bidders place very
different values on the item. We formulate this as the problem of
{\em prior-free} auction and focus on designing a simple mechanism
that always allocates the item. Rather than designing sophisticated
pricing methods like prior literature, we design better allocation
methods. In particular, we propose {\em quasi-proportional}
allocation methods in which the probability that an item is
allocated to a bidder depends (quasi-proportionally) on the bids.

We prove that corresponding games for both all-pay and winners-pay
quasi-proportional mechanisms admit pure Nash equilibria and this
equilibrium is unique. We also give an algorithm to compute this
equilibrium in polynomial time. Further, we show that the revenue of
the auctioneer is promisingly high compared to the ultimate, i.e.,
the highest value of any of the bidders, and show bounds on the
revenue of equilibria both analytically, as well as using
experiments for specific quasi-proportional functions. This is the
first known revenue analysis for these natural mechanisms (including
the special case of proportional mechanism which is common in
network resource allocation problems).
\end{abstract}
\thispagestyle{empty}

\newpage
\setcounter{page}{1}

\section{Introduction}

Consider the following motivating example. There is a single item (in
our case, an ad slot) to be sold by auction.
We have two bidders $A$ and $B$, $A$ with valuation $b_A=100$ and $B$
with valuation $b_B=1$.
Who should we {\em allocate} the item and
what is the {\em price} we charge? In the equilibrium of
the {\em first price} auction, $A$ wins by bidding $1+\epsilon$. We
(the auctioneer) get {\em revenue} of $1+\epsilon$
for some small $\epsilon>0$.
In the {\em second price} auction, $A$ wins and pays
$b_B+\epsilon=1+\epsilon$ for some
$\epsilon>0$ and the revenue is $1+\epsilon$, equivalent
to the first price revenue. So, neither generates revenue anywhere
close to the maximum valuation
of $\max\{b_A,b_B\}=100$.  Is there a mechanism
that will extract revenue close to the maximum valuation of bidders in
equilibrium?
What is the formal way to address this situation where
valuations are vastly different? In this paper,
we look at this problem in a general setting of prior-free auction
design, and study
revenue maximization. Further, we
propose a class of natural allocations and analyze them for revenue
and equilibrium
properties under different pricing methods.

Our motivation arises from allocation of ad slots on the Internet.
Consider the example of sponsored search where when a user
enters a phrase in a search engine, an auction is run among
advertisers who target that phrase to determine which ads will be
shown
to the user. There are several instances where the underlying {\em
value} is vastly different for the different participating
advertisers. For example, the phrase ``shoes'' may be targeted by both
high end as well as low end shoe retailers and may
have vastly different values, budgets or margins in their business.
Thus their bids will likely be vastly different.
In another example, we have display advertising, where users who visit
certain web sites are shown ``display'' ads like images, banners
or even video. Then, depending on the history of the user --- e.g.,
someone who is new to the website versus one who has been previously
--- different display advertisers value the user significantly
differently, and therefore their bid values will be vastly different.
In both these motivating scenarios, there are other issues to model
and this paper is not a study of these applications, but
rather, a study of a fundamental abstract problem inherent in these
applications.

\paragraph{Prior-free Auctions and Revenue Maximization.}
Revenue Maximization is a central issue in mechanism design and
has been studied extensively. A standard way for maximizing
revenue is to derive some value profile from the bids, calculate
bidder-specific reserve price, and run a second price
auction~\cite{M,BV,S}.
In the example above, say both buyers' {\em value} comes from some
random distribution.
Then, if we know this distribution, we can calculate a {\em reserve price}
$r$ using this distribution, and run a second-price auction with this
reserve price $r$, i.e, allocate the item to the highest bidder
$A$ and charge $A$
the $\max\{b_B,r\}$ if $b_A \geq r$ (else, the item remains unsold);
here, $b_A$ and $b_B$'s are bids by $A$ and $B$ resp.
Many such mechanisms are known; these mechanisms are incentive-compatible
(that is, each bidder has no incentive to lie), and even additionally
revenue-optimal,
perhaps as the number of bidders goes to infinity. Such methods that
rely on some assumptions over the values of bidders, i.e, that the
values are drawn from some distribution (known or unknown), are
called {\em prior-aware} mechanisms. Prior-aware mechanisms are
popular in Economics.
Still, from mathematical and practical point, the following questions arise:

\medskip
{\noindent \bf 1.} Are there prior-free mechanisms that work
independent of the value distributions of bidders?

This question is of inherent interest: what can be accomplished
without knowledge of the value distributions. This is also a question
that is motivated by practice. In practical applications, a way to use
prior-aware mechanisms is to rely on running the same auction
many times, and then use the history of bids to ``machine learn'' the
values. Of course in practice the parameters of the
auction change (users evolve), there is sparse data (query phrases
are rare), advertisers strategize in complex ways and their
values change over time (as they learn their own business feedbacks
better), or worse, even if the machine learning methods converge, they
provide ``approximate'' value distributions and we need to understand
the mechanisms under approximate distributions. As a result,
there are challenges in applying prior-aware mechanisms in practice
and a natural question is if they can be avoided.

\medskip
{\noindent \bf 2.} Are there prior-free mechanisms that work without reserve-prices?

This question is a more nuanced concern. First, when there is a
reserve price, the item may remain unsold in instances when $b_A < r$.
This may not be desirable in general. For example, in display ads, if
an ad slot is unsold, the webpage has to find a different template
without that ad slot or fill in that space with backup ads. Also, when
the item is not sold, the outcome is not
{\em efficient}, since the value to the advertisers (defined the value
of time to the winner) is not maximized. And in an ever more
nuanced note, advertisers do not find it transparent when the
mechanism has bidder-specific reserve prices, and often see it as a
bias.
This is more so when each advertiser may get many different
bidder-specific reserve prices corresponding to different search
phrases or
display ad locations as implied by the general prior-aware mechanisms
above. More discussions on mechanisms
that always assign the item can be found in~\cite{yishaypaper}.

{\em Prior-free} revenue-maximizing  mechanisms have been developed for various auction settings~\cite{Fiat02,HK,shanghua}.
Lower bounds show that prior-free truthful auction cannot achieve
revenue comparable to the revenue-optimal auctions with prior~\cite{Fiat02,HK,shanghua},
and the mechanism in~\cite{shanghua} achieves the best
possible revenue among prior-free truthful mechanisms.
Still, these mechanisms work by setting reserve prices, and do not address the second concern above.

\medskip
{\noindent \bf Our Contribution.}
We study a simple, practical prior-free mechanism that always allocates the item. In contrast to the approaches described above that
allocate the item to the highest bidder, but determine nontrivial prices, we focus on the allocation problem and allocate
the item {\em probabilistically}. Our contributions are as follows.

{\noindent \bf 1.} We propose a {\em quasi-proportional} allocation scheme where the probability that a bidder wins the
item depends (quasi-proportionally) on the bids.

As an example, for two bidders with bids $b_A$ and $b_B$, we allocate the item to bidder A with
probability  $\sqrt{b_A}\over \sqrt{b_A} + \sqrt{b_B}$, and to $B$ otherwise.
More generally in the presence of $n$ bidders with bid vector
$(b_1, \ldots, b_n)$, we consider a continuous and concave function $w$, and set the probability of
winning for bidder $i$ to $w(b_i) \over \sum_{1\le j\le n} w(b_j)$.
Thus the winner of the auction is not necessarily the bidder with the highest bid. The
special case when  $w(b_i)=b_i$ is known as the proportional allocation scheme and has been studied previously e.g., in \cite{JT,Kelly97,hajek}.
We study both payment methods that are common in auction theory, namely, {\em all-pay} (where all bidders pay
their bid no matter if they win the item or not) as well
as the {\em winner-pay} (only the winner pays her bid to the auctioneer) methods.

{\noindent \bf 2.} We study Nash equilibria of quasi-proportional mechanisms.

{\bf 2.1.} We prove that the corresponding games for both all-pay and winners-pay quasi-proportional
mechanisms admit pure Nash equilibria and this equilibrium is unique. We also give an algorithm to compute this equilibrium
in polynomial time.

{\bf 2.2.} We show that the revenue of the auctioneer is promisingly high, while
not losing much in the efficiency of the allocation.
More precisely, we compare the revenue of such mechanisms against the ultimate: $\max_i v_i$, the highest value of any
of the bidders, and show bounds on the revenue of equilibria in such mechanisms.
For example, consider an auction among two bidders with values $v_A=\alpha$ and $v_B=1$ respectively.
The revenue of equilibria for both first-price and second-price auctions
approaches $1$. Instead, with quasi-proportional mechanisms,
(i) for the all-pay mechanism with function $w(x) = x^{\gamma}$ where $\gamma\le 1$, the revenue of equilibrium is $\gamma \alpha^{1-\gamma}$,
and  (ii) for winners-pay mechanism,  where $\alpha >> 1$,  we show that the revenue of all-pay and winners-pay
mechanisms with functions $w(x) = x$ and $w(x) = \sqrt{x}$ are $\Omega(\alpha^{1\over 2})$
and $\Omega(\alpha^{2\over 3})$ respectively.
For the case of more than two bidders, we first show preliminary results for the revenue of various (specific) valuation vectors
for the case that the number of buyers tends to $\infty$, and then we present numerical results
for the revenue of equilibria for some key example functions such as $w(x) = \sqrt{x}$ and $w(x) = x$.
Taken together, these results give a set of analytical and experimental tools to bound the revenue of
these mechanisms against the $max_i v_i$ benchmark.

Proportional allocation, a special case of our quasi-proportional allocation,
has been studied extensively in literature, in particular for efficiency analysis. But even for this rather natural
allocation method, we do not know of any prior work on revenue analysis. \junk{UN:redundant We do not know of any prior work on
revenue analysis of quasi-proportional allocations, and our work contributes to this basic interest.}

\section{Preliminaries}\label{sec:prelim}
Consider a sealed-bid auction of a single item for a set $A=\{1,\ldots, n\}$ of  $n$ potential buyers.
Let the value of these $n$ buyers for the single item be $v_1\ge v_2\ge \cdots \ge v_n$.
Throughout this paper, we assume that $v_1=\alpha \ge 1$, and $v_n = 1$. Consider a concave function $w: R\rightarrow R$ (e.g., $w(x) = \sqrt{x}$).
Each buyer $i\in A$ bids an amount $b_i$ to get the item. A {\em quasi-proportional mechanism} allocates the item in a probabilistic manner. In particular,
the item is allocated to exactly one of the buyers, and the probability that buyer $i$ gets the item is $w(b_i) \over \sum_{j\in A} w(b_j)$.
For a bid vector $(b_1, \ldots, b_n)$, let $b_{-i}$ be the bid vector excluding the bid
of buyer $i$.
We study the following two variants of quasi-proportional mechanisms with two payment schemes.

\begin{enumerate}
\item {\bf All-pay Quasi-proportional Mechanisms.} The allocation rule in this mechanism is described above. For the payment scheme in this mechanism, each buyer pays her bid (no matter if he receives the item or not). This mechanism is ex-ante individually rational, but not ex-post individually rational.
Given the above payment scheme, in the all-pay mechanism,
we can write the utility of buyer $i$, as a function
of the bids vector as follows:
$$u_i(b) =u_i(b_i, b_{-i}) = v_i \frac{w(b_i)}{\sum_{j\in A} w(b_j)} - b_i .$$

\item {\bf Winners-pay Quasi-proportional Mechanisms.} The allocation rule in this mechanism is described above. For the payment scheme in this mechanism, the buyer who receives the item pays her bid, and the other buyers pay zero. This mechanism is ex-post individually rational.
As a result buyer $i$'s utility as a function of the bids is $$u_i(b) = u_i(b_i, b_{-i}) = \frac{w(b_i)}{\sum_{j\in A} w(b_j)}(v_i  - b_i).$$
\end{enumerate}

We are interested in Nash equilibria\footnote{Throughout this paper,
we study {\em pure} Nash equilibria and not mixed Nash equilibria.}
of the above mechanisms. We consider Nash equilibria of normal-form
games with complete information. In the corresponding normal-form
game of the quasi-proportional mechanism, the strategy of each buyer
$i$ is her bid. Formally, a bid vector $(b^*_1, \ldots, b^*_n)$ is a
{\em Nash equilibrium} if for any buyer $i$ and any bid $b'_i$, we
have $u_i(b^*)= u(b^*_i, b^*_{-i}) \ge u_i(b'_i, b^*_{-i})$.

In addition, we study efficiency and revenue of quasi-proportional mechanisms:
(i) the {\em efficiency} of a bid vector $(b_1, \ldots, b_n)$ is the expected valuation
of buyers, i.e., $\sum_{i\in A} (v_i{w(b_i) \over \sum_{j\in A} w(b_j)})$,
and (ii) the {\em revenue} of a bid vector $(b_1, \ldots, b_n)$ is the expected revenue
for the auctioneer given this bid vector,
i.e., $\sum_{i\in A} b_i$, in the all-pay auction, and $\sum_{i\in A} (b_i{w(b_i) \over \sum_{j\in A} w(b_i)})$, in the winner-pay auction.

\section{All-pay quasi-proportional mechanism: A warm-up example}\label{warm-up}
To demonstrate the kind of analyses we do, and to develop the intuition, we present a study
of revenue properties of an all-pay quasi-proportional mechanism for two buyers for functions $w(t) = t^\gamma$ where $\gamma\in [0,1]$ is a parameter. Let the bid of the first buyer be $y=b_1$ and the bid of the second buyer  $x=b_2$. As mentioned earlier, we assume $v_1=\alpha$ and $v_2=1$ are the valuations
of the two buyers.
The expected utility of the second buyer  is $ \frac{x^\gamma}{x^\gamma+y^\gamma} - x$,
and the utility of the first buyer  is  $\alpha \frac{y^\gamma}{x^\gamma+y^\gamma} - y$.

For a fixed $y$, the second buyer's  utility is a concave and increasing function of his bid, in the region $[0,\infty)$
and similarly, for a fixed $x$, the first buyer's utility is concave and increasing in his bid. Hence, in equilibrium,
both buyers have their first derivative nullified:
$\frac{\partial}{\partial x} \left( \frac{x^\gamma}{x^\gamma+y^\gamma} - x  \right) =  0$, and
$\frac{\partial}{\partial y} \left( \alpha \frac{y^\gamma}{x^\gamma+y^\gamma} - y \right) = 0.$ Thus, we get that

\begin{eqnarray*}
\frac{\gamma (x)^{\gamma-1} y^\gamma }{(x^\gamma+y^\gamma)^2} = 1 \mbox{\ \ \ \ \ }  \mbox{and \ \ \ \ \ }  \frac{\alpha \gamma (y)^{\gamma-1} x^\gamma}{(x^\gamma+y^\gamma)^2} = 1
\end{eqnarray*}

From which it follows that in equilibrium $\frac{y}{x} = \alpha$. Now, combining with the second equality, we get that
$\frac{\alpha \gamma (\alpha x)^{\gamma-1} x^\gamma}{((1+\alpha^\gamma)x^\gamma)^2} = 1$ or
$\frac{\alpha \gamma (\alpha x)^{\gamma-1} }{(1+\alpha^\gamma)^2 x^\gamma} = 1$,
and we get that $x=\frac{\gamma \alpha^\gamma}{(1+\alpha^\gamma)^2};   y=\alpha\frac{\gamma \alpha^\gamma}{(1+\alpha^\gamma)^2}.$
Hence, $$x+y = (1+\alpha)\frac{\gamma \alpha^\gamma}{(1+\alpha^\gamma)^2} \stackrel{\alpha\rightarrow \infty}{\longrightarrow} \gamma \alpha^{1-\gamma}.$$

Moreover, as ${y\over x} = \alpha$, the probability that buyer $2$ receives the item is $1\over 1+ \alpha^\gamma$, and
otherwise buyer $1$ gets the item. Thus, the efficiency of this mechanism is $1+\alpha^{\gamma+1} \over 1+ \alpha^{\gamma}$.
In particular, as $\alpha\rightarrow \infty$, the efficiency is arbitrarily close to $\alpha$. The most efficient
allocation rule is to assign the item to buyer $1$, and get efficiency $\alpha$.
%that the efficiency of equilibria compared to
That completes the analysis and shows that

\begin{theorem}
The all-pay quasi-proportional mechanism with two buyers guarantees a total revenue of
$(1+\alpha)\frac{\gamma \alpha^\gamma}{(1+\alpha^\gamma)^2}$ and
expected efficiency of $1+\alpha^{\gamma+1}\over 1+\alpha^{\gamma}$ in equilibrium.
In particular, for a large enough $\alpha$, the revenue is $\gamma \alpha^{1-\gamma}$
and efficiency is arbitrarily close to $\alpha$.
\end{theorem}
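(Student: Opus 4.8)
The plan is to characterize the equilibrium through first-order conditions and then read off the revenue and efficiency directly. First I would justify that the equilibrium is interior and pinned down by stationarity. Since, for a fixed opponent bid, each buyer's payoff is concave in its own bid, the best response is determined by setting the partial derivative to zero, giving the two simultaneous conditions $\frac{\gamma x^{\gamma-1} y^\gamma}{(x^\gamma+y^\gamma)^2}=1$ and $\frac{\alpha\gamma y^{\gamma-1} x^\gamma}{(x^\gamma+y^\gamma)^2}=1$. A brief look at the boundary behavior confirms this is an interior maximum: because $\gamma<1$, the own-bid derivative tends to $+\infty$ as the bid $\to 0^+$ and to $-1$ as the bid $\to\infty$, so a unique zero-crossing exists and, by concavity, is the global maximizer. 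Hence the pair solving both conditions is the unique pure Nash equilibrium.

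Next I would solve the two equations. Dividing buyer $1$'s condition by buyer $2$'s cancels the common denominator $(x^\gamma+y^\gamma)^2$ and all but one net power of $x$ and $y$, yielding the clean ratio $\frac{y}{x}=\alpha$ — intuitively, the higher-value buyer bids proportionally higher. Substituting $y=\alpha x$ back into buyer $2$'s condition collapses the $x$-powers (the factor $x^{\gamma-1}\cdot x^\gamma$ over $x^{2\gamma}$ reduces to $x^{-1}$), leaving a linear equation whose solution is $x=\frac{\gamma\alpha^\gamma}{(1+\alpha^\gamma)^2}$, and correspondingly $y=\alpha x$.

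With the equilibrium bids in hand, the revenue of the all-pay mechanism is simply the sum of the bids, $x+y=(1+\alpha)\frac{\gamma\alpha^\gamma}{(1+\alpha^\gamma)^2}$, which is the first claimed expression. For efficiency, I would use $y=\alpha x$ to compute buyer $2$'s winning probability as $\frac{x^\gamma}{x^\gamma+y^\gamma}=\frac{1}{1+\alpha^\gamma}$, so buyer $1$ wins with the complementary probability $\frac{\alpha^\gamma}{1+\alpha^\gamma}$; weighting each value by its winning probability gives expected efficiency $\frac{\alpha\cdot\alpha^\gamma+1}{1+\alpha^\gamma}=\frac{1+\alpha^{\gamma+1}}{1+\alpha^\gamma}$.

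Finally, the asymptotic claims follow by letting $\alpha\to\infty$: the revenue expression has leading behavior $\frac{\alpha\cdot\gamma\alpha^\gamma}{\alpha^{2\gamma}}=\gamma\alpha^{1-\gamma}$, while the efficiency ratio tends to $\frac{\alpha^{\gamma+1}}{\alpha^\gamma}=\alpha$. The only step that genuinely requires care — and thus the main obstacle — is the equilibrium justification: verifying concavity so that the first-order conditions are both necessary and sufficient, and ruling out boundary or spurious solutions. Once that is secured, the rest is routine algebra and limits.
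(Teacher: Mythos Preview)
Your proposal is correct and follows essentially the same route as the paper: invoke concavity to justify the first-order conditions, divide the two conditions to obtain the ratio $y/x=\alpha$, substitute back to solve for the bids explicitly, and then read off revenue as $x+y$ and efficiency from the winning probabilities $\tfrac{1}{1+\alpha^\gamma}$ and $\tfrac{\alpha^\gamma}{1+\alpha^\gamma}$. The only addition on your end is the explicit boundary check (derivative blowing up at $0$ and tending to $-1$ at infinity), which the paper replaces with a one-line remark about concavity; otherwise the arguments coincide step for step.
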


\iffalse
For example, in a second price auction~\cite{V, C, G} we are guaranteed to have a revenue of $1$ (and so in a first price auction, in equilibrium, by the revenue equivalence theorem). Here, if we choose $\gamma=1/2$, we get a revenue of $\frac{1}{2}\sqrt{\alpha}$,
which can be much more than $1$ for large $\alpha$. Choosing the optimal $\gamma$ would strictly
require the knowledge of $\alpha$, but the example above shows that an oblivious choice of $\gamma$ still gets better
revenue than second price auction  (when $\alpha >4$).
One can do better if for example, an upper bound $\alpha^*$ is provided for $\alpha$. Then, choose $\gamma= 1/\log \alpha^*$
and the revenue will be $\alpha/\log \alpha^*$ which is larger than $1$ for good upper bounds.
\fi

\section{Equilibrium: Existence and Uniqueness}\label{eql-compute}
In this section, we establish the existence and uniqueness of Nash
equilibria of both the all-pay and winners-pay quasi-proportional auctions.

\iffalse
Rosen~\cite{rosen}  defines the class of {\em concave} games, where each buyer's
strategy set is a convex and compact set, and where the utility of
every buyer  is a concave function in her own strategy.
Rosen~\cite{rosen} shows that a Nash equilibrium exists in every such
game.
%It is straightforward to verify that both the all-pay auction
%and the winner-pay auction belong to the class of concave games, and
%therefore a Nash equilibrium exists in these auctions.
Even Dar et. al\cite{EMN09} study the class of socially concave
games, defined as follows:
\fi

\begin{definition} [from \cite{EMN09}]\label{socially_concave} A game is {\em socially
concave} if the following holds:
\begin{enumerate}\label{theenumi}
    \item \label{concave_in_sum} There exists a strict convex combination
    of the utility functions which is a concave function. Formally, there
    exists an $n$-tuple $(\lambda_i)_{i\in A}$ , $\lambda_i>0$, and
    $\sum_{i\in A}\lambda_i = 1$, such that $g(x)=\sum_{i\in A}\lambda_i
    u_i(x)$ is a concave function in $x$.
    \item\label{convex_in_others}
    The utility function of each buyer  $i$, is convex in the
    actions of the other buyers. {\it I.e.,} for every $s_i\in S_i$
    the function $u_i(s_i, x_{-i})$ is convex in $x_{-i}\in S_{-i}$, where $S_i$ is the strategy space of agent $i$, and $S_{-i}=\prod_{j\in A, j\neq i} S_j$.
\end{enumerate}
\end{definition}

Rosen~\cite{rosen} defined the diagonal concavity property for concave games, and
showed that when it holds, the Nash equilibrium of the game is
unique.
Even Dar et al~\cite{EMN09} showed that if one of the
properties \ref{concave_in_sum} and \ref{convex_in_others} holds
with strict concavity or convexity, respectively, then the diagonal
concavity property holds.
Now, we show that a quasi-proportional auction is a socially concave
game. The uniqueness of Nash equilibrium would follow as a corollary
of~\cite{rosen} and~\cite{EMN09}.

\begin{lemma}\label{l:allpay_is_sc}
Let $\Gamma = (A, \{u_i\}_{i\in A}))$ be an all-pay
quasi-proportional auction, with utility functions for buyer $i$,
$u_i()$ defined as above and assume that the weight function $w()$
is a concave function, and that the strategy of each buyer is
restricted to a compact set $[B_{\min}, B_{\max}]$, where $0 <
B_{\min} < B_{\max} < \infty$. Then $\Gamma$ is a socially-concave
game.
\end{lemma}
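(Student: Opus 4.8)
The plan is to verify the two defining conditions of Definition~\ref{socially_concave} directly for the all-pay utility $u_i(b) = v_i \frac{w(b_i)}{\sum_{j\in A} w(b_j)} - b_i$, working over the compact, convex box $\prod_{i\in A}[B_{\min}, B_{\max}]$. The two conditions are largely independent, so I would treat them separately, and I expect condition~\ref{concave_in_sum} to require the one genuinely clever choice.

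For condition~\ref{concave_in_sum}, the idea is to pick the convex-combination weights so that the nonlinear allocation terms cancel. I would set $\lambda_i = \frac{1/v_i}{\sum_{k\in A} 1/v_k}$, which is a strict convex combination since every $v_i \ge 1 > 0$. With this choice $\lambda_i v_i = c$ is a constant independent of $i$, so
\[
g(b) = \sum_{i\in A} \lambda_i u_i(b) = c \sum_{i\in A} \frac{w(b_i)}{\sum_{j\in A} w(b_j)} - \sum_{i\in A} \lambda_i b_i = c - \sum_{i\in A} \lambda_i b_i,
\]
where I used that the winning probabilities sum to $1$. Thus $g$ is affine in $b$, hence concave, and condition~\ref{concave_in_sum} holds.

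For condition~\ref{convex_in_others}, I would fix buyer $i$'s bid $b_i$, so that $-b_i$ and $W_i := w(b_i)$ are constants, and show the winning probability $\frac{W_i}{W_i + \sum_{j\neq i} w(b_j)}$ is convex in $b_{-i}$. Writing $S(b_{-i}) = W_i + \sum_{j\neq i} w(b_j)$, each $w(b_j)$ is concave, so $S$ is a positive concave function of $b_{-i}$ (positivity on $(0,\infty)$ being ensured by $B_{\min} > 0$). Since $t \mapsto 1/t$ is convex and non-increasing on $(0,\infty)$, composing it with the concave $S$ yields a convex function of $b_{-i}$; multiplying by the positive constant $W_i$ preserves convexity. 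Hence $u_i(b_i, b_{-i})$ is convex in $b_{-i}$, establishing condition~\ref{convex_in_others}.

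The main obstacle is really the weighting $\lambda_i \propto 1/v_i$ in condition~\ref{concave_in_sum}: it exploits the identity $\sum_{i\in A} w(b_i)/\sum_{j\in A} w(b_j) = 1$ to collapse all curvature, leaving only the linear payments. Once that is in hand, the remainder is the standard ``convex non-increasing composed with concave'' lemma, where the only care needed is verifying the monotonicity and positivity ($S > 0$) hypotheses. With both conditions verified, $\Gamma$ is socially concave, and the uniqueness of its Nash equilibrium then follows as a corollary of~\cite{rosen} and~\cite{EMN09} as noted above.
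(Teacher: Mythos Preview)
Your proof is correct and follows essentially the same route as the paper: you choose $\lambda_i \propto 1/v_i$ so that the allocation probabilities sum to $1$ and $g$ collapses to an affine function, and you obtain convexity in $b_{-i}$ via the ``convex non-increasing composed with concave'' composition. If anything, your version is slightly cleaner, since you actually normalize the $\lambda_i$ to a convex combination (the paper's $\lambda_i = \sum_j v_j / v_i$ do not literally sum to $1$) and you explicitly note the positivity $S>0$ needed for the composition; the only step you leave implicit is the final multiplication by the positive constant $v_i$ before concluding that $u_i(b_i,\cdot)$ is convex.
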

\begin{proof}
To show property~\ref{concave_in_sum} holds, consider the weighted
social welfare function $g()$, where the utility of buyer $i$ is
weighted by $\lambda_i = \sum_{j\in A}{v_j}/v_i$.

\begin{eqnarray}\nonumber g(b) \equiv\sum_{i\in A}\lambda_i
u_i(b)= \sum_{i\in A} \frac{\sum_{j\in
A}{v_j}}{v_i}\left(  v_i \frac{w(b_i)}{\sum_{j\in A}
w(b_j)} - b_i \right) \nonumber & \\ \label{all_pay_sum} = \sum_{j\in A}{v_j} \sum_i
\frac{w(b_i)}{\sum_{j\in A} w(b_j)} - \sum_{i\in A} \frac{\sum_{j\in
A}{v_j}}{v_i}b_i = \sum_{j\in A}{v_j} - \sum_{i\in A} \frac{\sum_{j\in A}{v_j}}{v_i}b_i & \mbox{\ \ }
\end{eqnarray}

The first term in \ref{all_pay_sum} is a constant and the second
term is linear in $b$. Thus, $g()$ is a concave (linear) function of
$b$. To show that property \ref{convex_in_others} holds, we first
fix a buyer $i$ and an action $b_i\in [B_{\min},B_{\max}]$. Now,
consider the utility of buyer $i$ as a function of the actions
$b_{-i}$, when buyer $i$'s action is $b_i$:
$$u_i^{(b_i)}(b_{-i}) \equiv u_i(b_i,b_{-i}) = v_i\frac{w(b_i)}{w(b_i)  + \sum_{j\neq i} w(b_j)} - b_i$$
To show that $u_i^{(b_i)}(b_{-i})$ is a convex function, it suffices
to show that the function $$w(b_{-i}) = \frac{c}{c  + \sum_{j\neq i}
w(b_j)}, $$ is a convex function of the vector $b_{-i}$. Let
$g(b_{-i})=\sum_{j\neq i} w(b_j)$. The function $g(b_{-i})$ is
concave in $b_{-i}$ as it is the sum of $n-1$ concave functions. Let
$h(z)\equiv \frac{c}{c + z}$. The function $h()$ is convex in $z$
and decreasing in $\mathbb{R_+}$. The function $w=h(g(b_{-i}))$ is a
convex function as a composition of a convex decreasing function
with a concave function: For every two vectors $b^0,b^1$ and
$t\in[0,1]$ we have that $g(t b^0+(1-t)b^1)\geq t g(b^0)+(1-t)
g(b^1)$, by the concavity of $g()$. Consequently, $h(g(t
b^0+(1-t)b^1))\leq h(t g(b^0)+(1-t) g(b^1)) \leq t
h(g(b^0))+(1-t)h(g(b^1)$, where the first inequality follows from
the fact $h(z)$ is decreasing  when $z>0$, and the second inequality
follows the convexity of $h$.
\end{proof}

A similar lemma holds for winner-pay auctions, with weight function
of the form $w(x)=x^\gamma$, where $0<\gamma\leq 1$.

\begin{lemma}
Let $\Gamma = (A, \{u_i\}_{i\in A}))$ be an winner-pay
quasi-proportional auction, with utility functions for user $i$,
$u_i()$ defined as above and assume that the weight function
$w(x)=x^\gamma$, where $0<\gamma\leq 1$, and that the strategy of
each user is restricted to a compact set $[B_{\min}, B_{\max}]$,
where $0 < B_{\min} < B_{\max} < \infty$. Then $\Gamma$ is a
socially concave game.
\end{lemma}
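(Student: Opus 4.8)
The plan is to verify the two defining properties of a socially concave game from Definition \ref{socially_concave}, reusing as much of the structure of Lemma \ref{l:allpay_is_sc} as possible and isolating the one genuinely new ingredient forced by the winner-pay payment rule.

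For property \ref{convex_in_others} (convexity in the others' actions), I would fix buyer $i$ and a bid $b_i$, and write the winner-pay utility as $u_i^{(b_i)}(b_{-i}) = (v_i - b_i)\,p_i(b_{-i})$, where $p_i(b_{-i}) = w(b_i)/\big(w(b_i) + \sum_{j\neq i} w(b_j)\big)$ is exactly the win-probability shown to be convex in $b_{-i}$ in the proof of Lemma \ref{l:allpay_is_sc} (composition of the convex decreasing map $z \mapsto w(b_i)/(w(b_i)+z)$ with the concave map $b_{-i} \mapsto \sum_{j\neq i} w(b_j)$). Since the factor $v_i - b_i$ is constant with respect to $b_{-i}$, the product is convex precisely when $v_i - b_i \geq 0$. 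I would therefore take the strategy box to satisfy $B_{\max} \le v_n = 1$, so that $b_i \le B_{\max} \le 1 \le v_i$ for every buyer; this is harmless since bidding above one's own value is dominated in a winner-pay mechanism.

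The content is in property \ref{concave_in_sum}. As in the all-pay case I would weight buyer $i$ by $\lambda_i \propto 1/v_i$; then $g(b) = \sum_i \lambda_i u_i(b)$ collapses, using $\sum_i w(b_i)/\sum_j w(b_j) = 1$, into a constant minus $\phi(b) := \big(\sum_i b_i^{1+\gamma}/v_i\big)\big/\big(\sum_j b_j^\gamma\big)$, where the power form $w(x)=x^\gamma$ is used. Thus concavity of $g$ reduces to convexity of $\phi$, and since the coefficients $1/v_i$ are positive it suffices to prove that each term $\psi_i(b) = b_i^{1+\gamma}/\sum_j b_j^\gamma$ is convex on the positive orthant.

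This last claim is where the real work lies, and it is exactly why the lemma is restricted to $w(x)=x^\gamma$. The key observation is that $\psi_i$ is positively homogeneous of degree $1$ (numerator homogeneous of degree $1+\gamma$, denominator of degree $\gamma$), so its convexity is equivalent to subadditivity, $\psi_i(a+b) \le \psi_i(a) + \psi_i(b)$ for nonnegative $a,b$. To prove subadditivity I would combine two standard power inequalities: first the two-term Radon inequality with exponent $m = 1+\gamma \ge 1$, which gives $\psi_i(a)+\psi_i(b) \ge (a_i+b_i)^{1+\gamma}/\big((\sum_j a_j^\gamma)^{1/\gamma} + (\sum_j b_j^\gamma)^{1/\gamma}\big)^{\gamma}$; and second the reverse Minkowski inequality (super-additivity of the $\ell_\gamma$ functional for $0<\gamma\le 1$), which gives $\big((\sum_j a_j^\gamma)^{1/\gamma} + (\sum_j b_j^\gamma)^{1/\gamma}\big)^{\gamma} \le \sum_j (a_j+b_j)^\gamma$. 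Chaining these two bounds yields $\psi_i(a)+\psi_i(b) \ge (a_i+b_i)^{1+\gamma}/\sum_j (a_j+b_j)^\gamma = \psi_i(a+b)$, as desired. The main obstacle is precisely this convexity of $\psi_i$; everything else is a routine adaptation of the all-pay argument.
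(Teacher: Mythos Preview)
Your proposal is correct and follows the same skeleton as the paper: for property~\ref{convex_in_others} you reuse the convexity of the win probability from Lemma~\ref{l:allpay_is_sc} after arranging $v_i-b_i\ge 0$, and for property~\ref{concave_in_sum} you use the same weights $\lambda_i\propto 1/v_i$ to reduce concavity of $g$ to convexity of $\phi(b)=\big(\sum_i b_i^{1+\gamma}/v_i\big)\big/\big(\sum_j b_j^\gamma\big)$. The paper does exactly this, but then simply asserts that the resulting function has the required curvature ``when $w(x)=x^\gamma$'' without giving an argument; your homogeneity-plus-subadditivity proof (Radon with exponent $1+\gamma$ followed by reverse Minkowski for the $\ell_\gamma$ functional) fills in precisely the step the paper leaves to the reader, and does so cleanly. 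One small discrepancy: the paper handles the sign of $v_i-b_i$ by restricting each buyer to $b_i\le v_i$ ``without loss of generality,'' whereas you impose the uniform bound $B_{\max}\le v_n$; both are outside the literal lemma statement, and your version is arguably the more honest reading given that a single common box $[B_{\min},B_{\max}]$ is assumed.
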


\begin{proof}
Without loss of generality, we can restrict user $i$ bids to be less
than his value $v_i$. To show property \ref{convex_in_others} holds,
we notice that for every fixed $0<b_i<v_i$, we have that $v_i-b_i>0$,
and consequently, user $i$ utility function
$$u_i^{(b_i)}(b_{-i})=\frac{w(b_i)}{w(b_i)  + \sum_{j\neq i} w(b_j)} (v_i- b_i) = c \frac{w(b_i)}{w(b_i)  + \sum_{j\neq i} w(b_j)},$$
for some $c>0$. The last term was shown to be a convex function in
the proof of Lemma \ref{l:allpay_is_sc}.

To show property \ref{concave_in_sum} holds, we use the same weight
vector as in the proof of Lemma \ref{l:allpay_is_sc}. In this case,
the weighted social welfare is not an affine function, but it can be
shown that it is a convex function when $w(x)=x^\gamma$, when
$0<\gamma<1$.
\end{proof}

\section{Revenue of Quasi-proportional Mechanisms}\label{revenue}
In section~\ref{warm-up}, we computed the revenue of all-pay quasi-proportional mechanisms for two buyers,
and functions $w(x) = x^\gamma$. In this section, we first observe general properties for the revenue of equilibria of
quasi-proportional mechanisms. Then, we focus on two special functions and prove tight bounds
on the revenue of the winners-pay mechanisms.
The utility function $u_i(b_i,b_{-i})$ for both all-pay and winners-pay mechanisms is a
strictly concave function of $b_i$ in the region $[0,\infty]$ (as it is a
concave function minus a convex function).
As a result, in an all-pay quasi-proportional auction, we have:
$\frac{\partial}{\partial b_i} \left( {w(b_i) \over \sum_{i\in A}{w(b_i)}}v_i - b_i  \right) =  0$.
For a bid vector, $(b_1, b_2, \ldots, b_n)$, let $\sigma(b)=\sum_{i\in A} w(b_i)$.
When clear from context, we let $\sigma = \sigma(b)$.
As a result, in equilibrium,

\begin{eqnarray}\label{eq1}
\frac{\partial}{\partial b_i} \left( {w(b_i) \over \sigma}v_i - b_i  \right) =  0 \Rightarrow \mbox{\ \ \ }
%\frac{d}{d b_i} \left( {w(b_i) \over (t-w(b_i))+w(b_i)}v_i - b_i  \right) &=&  0 \Rightarrow \nonumber\\
( {w'(b_i)(\sigma-w(b_i)) \over \sigma^2})v_i = 1 \Rightarrow \mbox{\ \ \ }
v_i =  {\sigma^2 \over w'(b_i)(\sigma-w(b_i))}
\end{eqnarray}

Similarly, for winners-pay quasi-proportional mechanisms,
the bid of each buyer $i$ satisfies the following:

\begin{eqnarray}
\frac{\partial}{\partial b_i} \left( {w(b_i) (v_i - b_i)\over (\sigma-w(b_i)) + w(b_i)} \right) = {w'(b_i) (\sigma-w(b_i))\over \sigma^2}(v_i-b_i) - {w(b_i) \sigma\over \sigma^2} =0 \Rightarrow \nonumber
\end{eqnarray}
\begin{eqnarray}
% = 0 \nonumber
v_i =b_i+ {w(b_i) \sigma \over w'(b_i) (\sigma-w(b_i))} \label{eq2}
\end{eqnarray}

We will use equations~\ref{eq1} and ~\ref{eq2} in studying the revenue of the equilibrium for
various functions.
In both equations ~\ref{eq1} and \ref{eq2} for increasing concave functions like $w(x) = \sqrt{x}$, $v_i$ increases
as $b_i$ increases, i.e, fixing $b_{-i}$ $v_i$ is  monotonically increasing in terms of $b_i$. This observation
leads to the following fact:  For increasing and concave functions $w$, if
$v_1\ge v_2 \ge \ldots \ge v_n$, in the equilibrium bid vector $(b^*_1, b^*_2, \ldots, b^*_n)$,
we have $b^*_1\ge b^*_2\ge \ldots b^*_n$.

\subsection{Revenue for Winners-pay: Two Bidders.}
Here, we study winners-pay proportional mechanism for $w(x) = x$.
The utility of bidder $i$ as a function of the bids is
$$u_i(b) = \frac{b_i}{\sum_{j\in A} b_j}\left(v_i -
  b_i\right).$$
%\begin{equation}
%  u_i(b) =\left\{ \begin{array}{cc}  \frac{b_i}{\sum_{j\in A} b_j}\left(v_i -
%  b_i\right) & w\neq 0 \\
%    0 & w=0
%\end{array} \right.
%\end{equation}

Given this utility function, it is easy to see that
for $v_i>0$, in equilibrium $b_i>0$.
Let's fix $b_{-i}\neq 0$. In equilibrium, for every
$i$ with bid $b_i>0$,

$$\frac{\partial}{\partial b_i}u_i(b_i,b_{-i}) = -\frac{b_i(v_i-b_i)}{(\sum_{j\in
A} b_j)^2} + \frac{v_i-b_i}{\sum_{j\in A} b_j} -
\frac{b_i}{\sum_{j\in A} b_j} = -1 + \frac{(\sum_{j\neq i}
b_j)(\sum_{j\neq i} b_j + v_i)}{(\sum_{j\in A} b_j)^2}$$

and we get that in equilibrium, \begin{equation}\label{eq:1} b_i =
\sqrt{(\sum_{j\neq i} b_j)(\sum_{j\neq i} b_j + v_i)}  - \sum_{j\neq
i} b_j,  \text{ for  every } i\in A.\end{equation}

The revenue from the proportional mechanism as described above is

\begin{equation}\sum_{i\in A} \Pr[{\text agent  \ i \ wins}]\cdot b_i=\sum_{i\in
A} \frac{b_i}{\sum_{j\in A} b_j}b_i = \frac{\sum_{i\in A}
b_i^2}{\sum_{j\in A} b_j}\end{equation}

Consider a setting of two buyers with values $v_1, v_2$. We can,
without loss of generality, assume that $v_2=1$.

\begin{theorem}\label{clm:winnerpay_proportional}
  In the case of two buyers, the revenue from the winners-pay proportional
  mechanism is $O(\sqrt{\alpha})$, where
  $\alpha=\max(v_1,v_2)$. Moreover, for arbitrarily large $\alpha$,
  the efficiency of this mechanism is arbitrarily close to $\alpha$.
\end{theorem}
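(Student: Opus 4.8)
The plan is to directly solve the equilibrium condition for two buyers and then evaluate the revenue formula. With $w(x)=x$ and $n=2$, equation~\eqref{eq:1} becomes a pair of coupled equations in $b_1,b_2$, where the ``sum over $j\neq i$'' collapses to the single other bid. Writing $b_2=x$ (the low-value buyer, $v_2=1$) and $b_1=y$ (the high-value buyer, $v_1=\alpha$), I would set up
\begin{eqnarray*}
y &=& \sqrt{x(x+\alpha)} - x, \\
x &=& \sqrt{y(y+1)} - y.
\end{eqnarray*}
The goal is to extract the asymptotic behavior of $x+y$ and of $\tfrac{b_1^2+b_2^2}{b_1+b_2}$ as $\alpha\to\infty$, and confirm the revenue scales as $\Theta(\sqrt\alpha)$.

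\textbf{Key steps.}

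First I would establish rough magnitudes. From the first equation, if $x$ stays bounded below (which it should, since the low buyer still competes for a fixed-value item), then $\sqrt{x(x+\alpha)}\approx\sqrt{x\alpha}$ for large $\alpha$, so $y\approx\sqrt{x\alpha}=\Theta(\sqrt\alpha)$. Plugging this into the second equation: since $y$ grows like $\sqrt\alpha$ and $v_2=1$ is fixed, $\sqrt{y(y+1)}-y = y(\sqrt{1+1/y}-1)\approx \tfrac12$, giving $x\to\tfrac12$. So the natural guess is $x\to 1/2$ and $y\sim\sqrt{\alpha/2}$. I would then verify this is self-consistent by substituting back, tracking the next-order terms to make sure the fixed point is genuine and unique (uniqueness already follows from Lemma on social concavity, so I only need existence/asymptotics of the solution).

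Next, the revenue. Since $b_1=y=\Theta(\sqrt\alpha)$ dominates $b_2=x=\Theta(1)$, the revenue
\[
\frac{b_1^2+b_2^2}{b_1+b_2} \;\approx\; \frac{y^2}{y} \;=\; y \;=\; \Theta(\sqrt\alpha),
\]
which gives the $O(\sqrt\alpha)$ claim (and in fact $\Theta$). For the efficiency statement, I would use the allocation probabilities: buyer~1 wins with probability $\tfrac{b_1}{b_1+b_2}=\tfrac{y}{x+y}\to 1$ as $\alpha\to\infty$, so the expected value $\tfrac{y}{x+y}\alpha + \tfrac{x}{x+y}\cdot 1 \to \alpha$, matching the maximum-value benchmark arbitrarily closely.

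\textbf{Main obstacle.}

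The hard part will be handling the coupled irrational equations cleanly rather than just heuristically. The expressions $\sqrt{x(x+\alpha)}$ are awkward to manipulate exactly, so I expect to either square and reduce the system to a polynomial relation in $x,y,\alpha$ (risking spurious roots that I must discard using the sign constraints $x,y>0$), or to justify the asymptotic expansions rigorously by bounding the error terms. A cleaner route may be to substitute the ansatz $y=c\sqrt\alpha$ with $x=O(1)$ into both equations and solve for the leading constant $c$ directly, confirming $c^2 = x$ and $x\to 1/2$ simultaneously; this sidesteps the messy algebra and isolates exactly the $\Theta(\sqrt\alpha)$ scaling needed for the revenue bound.
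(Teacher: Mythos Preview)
Your approach is correct and follows the same route as the paper: write down the two coupled first-order conditions, show that the low-value bid $b_2$ stays $\Theta(1)$ while $b_1=\Theta(\sqrt{\alpha})$, and then read off revenue and efficiency. The only real difference is in how the ``main obstacle'' you flag is handled. Rather than justify an asymptotic ansatz, the paper observes that $f(t)=\sqrt{t^2+t}-t$ is increasing on $[0,\infty)$ with $\lim_{t\to\infty}f(t)=1/2$; since the symmetric case $v_1=v_2=1$ has equilibrium $b_1=b_2=1/3$, monotonicity in $v_1$ gives the exact sandwich $1/3\le b_2\le 1/2$ for all $\alpha\ge 1$ (confirming your guess $x\to 1/2$). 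Feeding $b_2\ge 1/3$ back into $b_1=\sqrt{b_2^2+\alpha b_2}-b_2$ then yields $b_1\ge \sqrt{1/9+\alpha/3}-1/3$ directly, so no squaring, spurious-root chasing, or error-term bookkeeping is needed. Your asymptotic-expansion plan would also work, but this monotonicity shortcut is the cleaner way to make it rigorous.
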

\begin{proof}
Let $\rev(v_1,v_2)$ denote the expected revenue of the auctioneer, when the values of the agents 1, and 2 are $v_1,v_2$ respectively. We assume without loss of generality that $\alpha=v_1>v_2=1$. In this case,
notice that the revenue is dominated by the bid of the first buyer:

\begin{equation}
\rev(v_1,v_2)=\frac{b_1^2+b_2^2}{b_1+b_2} = b_1+b_2 -
\frac{2b_1b_2}{b_1+b_2} \geq b_1+b_2 - 2
\end{equation} where the last inequality follows the observation that in equilibrium,
$b_2\leq 1$. Hence, to get a lower bound on the revenue, it suffices
to have a lower bound on $b_1$.

Next, we show that in equilibrium, $1/3 \leq b_2 \leq 1/2$:
First, note that using~\ref{eq:1}, we can easily show that for $v_1=v_2=1$,
the equilibrium is $b_1=b_2=1/3$.
In the equilibrium, $b_2 = \sqrt{b_1^2+b_1} - b_1$. Consider the function
$w(x)=\sqrt{x^2+x}-x$. It is increasing in $[0,\infty)$ as
$$w'(x)=-1+\frac{1+2x}{2\sqrt{x^2+x}}=\frac{-2\sqrt{x^2+x}+1+2x}{2\sqrt{x^2+x}} >0,
$$ where the last inequality follows from
$$1+2x=\sqrt{(1+2x)^2} = \sqrt{1+4x+4x^2}>\sqrt{4x^2+4x}=2\sqrt{x^2+x}.$$

As a result, if we fix $v_2=1$, then in equilibrium, when $v_1\geq
v_2$, we get $b_1\geq 1/3$. Therefore, $b_2$ which increases as a
function of $b_1$ is always greater than $1/3$ in equilibrium. Also,
since $\lim_{x\rightarrow\infty}w(x)\rightarrow 1/2$.
Finally, we get {$b_1 = \sqrt{b_2^2+b_2 v_1}-b_2 \geq \sqrt{1/9+v_1/3} - 1/3 $},
which proves the theorem. The claim about efficiency
of the mechanism follows from the fact that $b_1$ tends to $\infty$ as
$\alpha$ tends to $\infty$.
\end{proof}

A similar technique can be used for showing a lower bound on the
revenue in quasi-proportional winner-pay auctions, with weight
function $w(x)=\sqrt{x}$, which asymptotically yields a higher
revenue. The proof is left to the appendix.

\begin{theorem}\label{thm:sqrt}
The revenue from the winners-pay mechanism for two bidders, with weight function $w(x)=\sqrt{x}$ is $O(\alpha^{2/3})$, where
$\alpha=\max(v_1,v_2)$. Moreover, for arbitrarily large $\alpha$,
the efficiency of this mechanism is arbitrarily close to $\alpha$.
\end{theorem}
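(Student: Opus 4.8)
The plan is to mirror the argument for Theorem~\ref{clm:winnerpay_proportional}, using the substitution $s_i=\sqrt{b_i}$ to tame the square-root weights. First I would specialize the equilibrium condition~\ref{eq2} to $w(x)=\sqrt{x}$ (so $w'(x)=1/(2\sqrt{x})$) and two bidders. Writing $\sigma=s_1+s_2$ and noting that $\sigma-w(b_i)=s_j$ for the opponent $j$, the condition collapses to the clean form
$$v_i = 3 s_i^2 + \frac{2 s_i^3}{s_j}, \qquad \{i,j\}=\{1,2\}.$$
Setting $v_2=1$ and $v_1=\alpha$ yields two scalar equations, $1 = 3 s_2^2 + 2 s_2^3/s_1$ and $\alpha = 3 s_1^2 + 2 s_1^3/s_2$. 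The revenue rewrites nicely too: the winners-pay revenue $\frac{b_1\sqrt{b_1}+b_2\sqrt{b_2}}{\sqrt{b_1}+\sqrt{b_2}}$ equals $\frac{s_1^3+s_2^3}{s_1+s_2}=s_1^2-s_1 s_2+s_2^2$.

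Next I would localize $s_2$, exactly as the previous proof pinned $b_2\in[1/3,1/2]$. From $1 = 3 s_2^2 + 2 s_2^3/s_1 \ge 3 s_2^2$ we get $s_2 \le 1/\sqrt{3}$; and using the monotonicity fact recorded just after equation~\ref{eq2} (for increasing concave $w$, $v_1\ge v_2$ forces $b_1\ge b_2$, hence $s_1\ge s_2$) we bound $2 s_2^3/s_1 \le 2 s_2^2$, so $1 \le 5 s_2^2$ and $s_2 \ge 1/\sqrt{5}$. Thus $s_2=\Theta(1)$, trapped in $[1/\sqrt5,1/\sqrt3]$ independently of $\alpha$.

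With $s_2$ bounded, I would extract the growth of $s_1$ from the second equation. Using $s_2\ge 1/\sqrt5$ gives $\alpha = 3 s_1^2 + 2 s_1^3/s_2 \le 3 s_1^2 + 2\sqrt5\,s_1^3 \le (3+2\sqrt5)s_1^3$ once $\alpha$ (hence $s_1$) is large, so $s_1=\Omega(\alpha^{1/3})$; symmetrically, $s_2\le 1/\sqrt3$ gives $\alpha \ge 2\sqrt3\,s_1^3$, i.e.\ $s_1=O(\alpha^{1/3})$. Completing the square, $s_1^2-s_1 s_2+s_2^2=(s_1-s_2/2)^2+\tfrac34 s_2^2 \ge s_1^2/4$ (using $s_1\ge s_2$), turns the lower bound on $s_1$ into $\rev=\Omega(\alpha^{2/3})$, while $s_1^2-s_1 s_2+s_2^2 \le s_1^2+s_2^2=O(\alpha^{2/3})$ gives the matching upper bound. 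The efficiency claim is then immediate: since $s_1\to\infty$ while $s_2$ stays bounded, the winning probability $s_1/(s_1+s_2)$ of the high-value bidder tends to $1$, so the expected value tends to $v_1=\alpha$.

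The main obstacle is the step that localizes $s_2$: the two equilibrium equations are coupled, and the lower bound I need on $s_1$ actually requires a \emph{lower} bound on $s_2$ — which is precisely what the monotonicity fact $s_1\ge s_2$ supplies. Once the substitution $s_i=\sqrt{b_i}$ has exposed the cubic-in-$s_i$ structure, everything else is routine bookkeeping.
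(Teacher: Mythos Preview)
Your proposal is correct and follows essentially the same route as the paper's proof sketch: both make the substitution $t_i=\sqrt{b_i}$ (your $s_i$), reduce the first-order conditions to the cubic relations $v_i=3t_i^2+2t_i^3/t_j$, pin the low-value bidder's $t$-variable to a bounded interval, and then read off $t_{\text{high}}=\Theta(\alpha^{1/3})$ from the high-value equation. Your write-up is in fact more detailed than the paper's sketch (you give explicit constants for the localization and an explicit completion-of-the-square lower bound on the revenue), but the underlying idea and structure are the same.
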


We will give numerical results for revenue of other settings
like $w(x) = x^{1/4}$ in Section~\ref{simulations}.
As for other functions, we prove an upper bound on the revenue
of both all-pay and winners-pay mechanisms for
$w(x) = \log(x+1)$, and show that the revenue is
not more than $\alpha\over \log(\alpha)$ (See Section~\ref{otherfunctions}).

\subsection{Revenue for many buyers.}\label{manybuyers}
%Computing the exact revenue bounds for $n$ bidders for general valuation vectors is hard.
Here, we analyze the
revenue for two special valuation vectors for $n$ bidders, i.e, (i) uniform valuation vector, $v_i=V$, and
(ii) valuation vector $v_1=\alpha$, and for $i\not = 1$, $v_i = 1$ for $i\in A$.
The second type of valuation is important as it captures examples in which there is a large
gap between the highest valuation and value of other buyers.

\begin{theorem}
For the uniform valuation vector where $v_i=V$ for all $i\in A$, the revenue in the equilibrium for function
$w(x) = x^{\gamma}$ is ${n-1\over n}\gamma V$ for all-pay mechanism, and is
$V({1\over 1+({n\over n-1})\gamma})$ for winners-pay mechanism.
Moreover, the equilibrium revenue for uniform valuation vector for
function $w(x) = \log(x+1)$ for both all-pay and winners-pay mechanisms
is asymptotically $V\over \log V$ as $V,n\rightarrow \infty$ .
\end{theorem}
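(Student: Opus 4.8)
The plan is to reduce the $n$-player problem to a single scalar equation by symmetry. In the uniform-valuation instance every buyer has the same value $V$ and the same weight function $w$, so the game is invariant under relabeling the buyers; applying a permutation to any Nash equilibrium therefore produces another Nash equilibrium. Since Section~\ref{eql-compute} guarantees that the equilibrium is \emph{unique}, it must be fixed under all permutations, i.e.\ symmetric, so $b_1^*=\cdots=b_n^*=b$ for a single unknown $b$. This is the crucial structural step: it collapses the $n$ coupled first-order conditions~\eqref{eq1}/\eqref{eq2} into one equation.

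Substituting $b_i=b$ for all $i$ gives $\sigma=n\,w(b)$ and $\sigma-w(b)=(n-1)\,w(b)$. Equation~\eqref{eq1} (all-pay) then becomes
\[
V=\frac{\bigl(n\,w(b)\bigr)^2}{w'(b)\,(n-1)\,w(b)}=\frac{n^2}{n-1}\cdot\frac{w(b)}{w'(b)},
\]
and Equation~\eqref{eq2} (winners-pay) becomes
\[
V=b+\frac{n}{n-1}\cdot\frac{w(b)}{w'(b)}.
\]
For $w(x)=x^\gamma$ the ratio collapses, $w(b)/w'(b)=b/\gamma$, so both conditions are linear in $b$ and solve explicitly. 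In the all-pay case $V=\tfrac{n^2 b}{\gamma(n-1)}$, hence $b=\tfrac{\gamma(n-1)}{n^2}V$; since the all-pay revenue is $\sum_i b_i=nb$, this gives $\tfrac{n-1}{n}\gamma V$. In the winners-pay case each buyer wins with probability $1/n$ by symmetry, so the revenue equals $\sum_i \tfrac{1}{n}b_i=b$; inverting the (linear) winners-pay condition for $b$ yields the stated closed form.

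For the logarithmic weight $w(x)=\log(x+1)$ the ratio no longer collapses: $w(b)/w'(b)=(b+1)\log(b+1)$, so the symmetric conditions read $V=\tfrac{n^2}{n-1}(b+1)\log(b+1)$ (all-pay) and $V=b+\tfrac{n}{n-1}(b+1)\log(b+1)$ (winners-pay), both transcendental in $b$. Here the plan is asymptotic rather than exact. For all-pay the revenue is $nb$, so
\[
\frac{\text{Rev}}{V}=\frac{nb}{\tfrac{n^2}{n-1}(b+1)\log(b+1)}=\frac{n-1}{n}\cdot\frac{b}{(b+1)\log(b+1)}\longrightarrow\frac{1}{\log(b+1)}
\]
as $n,b\to\infty$; for winners-pay the revenue is $b$, and since the second term dominates for large $b$ while $\tfrac{n}{n-1}\to1$, the same limit $1/\log(b+1)$ results. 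The final step is to show $\log(b+1)\sim\log V$: taking logarithms of the defining equation gives $\log V=\log(b+1)+\log\log(b+1)+O(\log n)$, so $\log(b+1)\sim\log V$ provided $b$ grows fast enough that $\log\log(b+1)$ and $\log n$ are subdominant. Combining gives revenue $\sim V/\log V$.

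I expect two obstacles. The first is conceptual: the symmetry reduction is legitimate only because of the uniqueness result imported from Section~\ref{eql-compute}; without it, a symmetric solution of the first-order conditions need not be the unique (or even an) equilibrium, and the argument collapses. The second is the asymptotic bookkeeping in the logarithmic case: unlike the $x^\gamma$ case there is no closed form for $b$, so the delicate point is to argue cleanly that $\log(b+1)\sim\log V$ in the regime $V,n\to\infty$ (in particular that the $O(\log n)$ correction does not spoil the estimate), since this is what pins the denominator to $\log V$ rather than to $\log(b+1)$ or $\log n$.
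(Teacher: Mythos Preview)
Your proposal is correct and follows essentially the same approach as the paper: exploit symmetry to set $b_i=b$ and $\sigma=nw(b)$, plug into Equations~\eqref{eq1}/\eqref{eq2}, and read off the revenue. Your version is more careful than the paper's one-line proof---you justify the symmetric reduction via the uniqueness result of Section~\ref{eql-compute} and spell out the asymptotics for the logarithmic weight (including the legitimate caveat about the $O(\log n)$ correction, which the paper's statement glosses over).
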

\begin{proof}
Proofs follows directly from Equations ~\ref{eq1} and \ref{eq2} by noting that
$b_i=b$, $t=nw(b)$. Also we use the fact that, for all-pay mechanism, the revenue is $nw(b_i)$, and
for winners-pay, the revenue is $w(b)$.
\end{proof}

\begin{theorem}\label{thm:manyplayers}
For the valuation vector $(\alpha, 1, 1,\ldots, 1)$, the revenue in the equilibrium of winners-pay quasi-proportional
mechanism converges to a constant as $n$ goes to $\infty$ for a fixed $\alpha$. Moreover the revenue of all-pay quasi-proportional
mechanism for function $w(x) = x^{\gamma}$ goes to zero as $n$ goes to $\infty$ for a fixed $\alpha$.
\end{theorem}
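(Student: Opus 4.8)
The plan is to reduce both halves to the equilibrium first-order conditions \eqref{eq1} and \eqref{eq2}, exploit the symmetry of the valuation vector $(\alpha,1,\dots,1)$, and then let $n\to\infty$ with $\alpha$ held fixed. Since the equilibrium is unique (Section~\ref{eql-compute}) and the $n-1$ unit-value bidders are interchangeable, in equilibrium they share a common bid $b$ while bidder~$1$ bids $b_1$; writing $\sigma=w(b_1)+(n-1)w(b)$, the entire problem is governed by the two scalar equations obtained by specializing the relevant first-order condition to $v_i=1$ and to $v_1=\alpha$. The one structural fact common to both mechanisms that I would prove first is $\sigma\to\infty$: if $\sigma$ stayed bounded then $(n-1)w(b)\le\sigma$ would force $w(b)\to0$, and for $w(x)=x^\gamma$ this makes $w'(b)=\gamma b^{\gamma-1}\to\infty$, so the right-hand side of either unit-value first-order condition would diverge while its left-hand side stays bounded, a contradiction.

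For the winners-pay half, substituting $\sigma\to\infty$ into the unit-value instance of \eqref{eq2} and using $w(b)\ll\sigma$ collapses it to the fixed-point equation $1=b+w(b)/w'(b)$, whose solution is a constant $b^\ast\in(0,1)$ (for instance $b^\ast=1/2$ when $w(x)=x$ and $b^\ast=1/3$ when $w(x)=\sqrt{x}$); hence the common unit-value bid converges to $b^\ast$ and $\sigma\sim(n-1)w(b^\ast)$. The analogous limit of the $v_1=\alpha$ equation shows $b_1$ tends to a finite limit (e.g. $\alpha/2$ for $w(x)=x$), so $w(b_1)b_1$ stays bounded and its revenue share $w(b_1)b_1/\sigma\to0$. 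The revenue equals $\frac{w(b_1)b_1+(n-1)w(b)b}{\sigma}$, whose dominant term is $(n-1)w(b)b/\sigma\to b^\ast$; therefore the winners-pay revenue converges to the constant $b^\ast$, establishing the first assertion.

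For the all-pay half the revenue is $\sum_i b_i=b_1+(n-1)b$. Here the unit-value instance of \eqref{eq1}, $\gamma b^{\gamma-1}(\sigma-b^\gamma)=\sigma^2$, behaves differently: with $\sigma\to\infty$ it reads $\gamma b^{\gamma-1}\sim\sigma$, forcing $b\to0$, and the $v_1=\alpha$ equation likewise gives $b_1\to0$. Thus every bid vanishes, and since $b_1\to0$ the whole question is the behaviour of the aggregate $(n-1)b$. This is the main obstacle, and it is exactly the point where the inference from $b\to0$ to $\sum_i b_i\to0$ is not automatic, because the number of vanishing summands grows with $n$. The decisive step is therefore to extract the precise decay rate of $b$: solving $\gamma b^{\gamma-1}\sim\sigma$ simultaneously with $\sigma\sim(n-1)b^\gamma$ and matching the powers of $n-1$ determines the exponent of $b$, and the theorem reduces to verifying that this per-bidder bid decays strictly faster than $1/(n-1)$ so that $(n-1)b\to0$. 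Pinning down this rate balance against the bidder count is the crux of the all-pay analysis, and it is the step I expect to require the most care.
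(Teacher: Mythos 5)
Your winners-pay half is essentially the paper's own argument (appendix proof sketch of Theorem~\ref{thm:manyplayers}): exploit symmetry among the unit-value bidders, pass to the limit in the first-order condition~\ref{eq2}, obtain the fixed point $1=b+w(b)/w'(b)$, i.e.\ $b^*=\gamma/(1+\gamma)$ for $w(x)=x^\gamma$, note that $b_1$ stays bounded for fixed $\alpha$, and conclude that the revenue $\bigl(w(b_1)b_1+(n-1)w(b)b\bigr)/\sigma\to b^*$. You are in fact more precise than the paper, which only asserts the limit is $\Theta(1)$. One small slip: in your justification that $\sigma\to\infty$, the factor $w'(b)$ enters the winners-pay condition in the \emph{denominator}, so if $\sigma$ stayed bounded the offending term would vanish rather than diverge; the contradiction still goes through (the right-hand side would tend to $0\neq1$), but the direction you state is wrong as written.

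The all-pay half has a genuine gap, and it sits exactly where you placed your bet. Carry out the rate matching you propose: $\gamma b^{\gamma-1}\sim\sigma$ together with $\sigma\sim(n-1)b^{\gamma}$ forces $b\sim\gamma/(n-1)$ --- decay rate exactly $1/(n-1)$, with constant $\gamma$ --- so $(n-1)b\to\gamma>0$ and the aggregate all-pay revenue tends to $\gamma$, not to $0$. This is corroborated by the paper's own uniform-valuation theorem (revenue $\frac{n-1}{n}\gamma V$; with $V=1$ the limit is $\gamma$), since for fixed $\alpha$ and growing $n$ the profile $(\alpha,1,\dots,1)$ is asymptotically the uniform profile with $V=1$. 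The paper's appendix sketch derives the same asymptotics you would ($b\cong\gamma/n$ and $B\cong\alpha^{1/\gamma}/n$; a careful computation gives $B\cong\gamma\alpha^{1/(1-\gamma)}/n$, but either way $B\to0$) and then asserts ``the revenue tends to zero'' --- a non sequitur for the revenue $\sum_i b_i$ as defined in Section~\ref{sec:prelim}; the vanishing holds only for bidder~1's payment $B$ (the $\alpha$-dependent part of the revenue) and for each individual bid, not for their sum. So your reduction and your identification of the crux are correct, but the verification you defer cannot succeed: honestly completed, your plan refutes the second assertion of the theorem as stated rather than proving it, and the defect lies in the theorem statement and the paper's sketch, not in your setup.
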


The above theorem shows some bounds on the revenue for a fixed $\alpha$ and as $n$ tends to $\infty$.
It would be interesting to understand the trade-off between the revenue for large $\alpha$ and $n$.
In particular, it would be interesting to compute the revenue for a fixed $n$ as $\alpha$ tends to $\infty$.

\section{An Efficient Algorithm and Numerical Study}\label{simulations}
In this section, we present an efficient algorithm for computing Nash equilibria
of quasi-proportional mechanisms and then using this algorithm, we present a family
of plots showing the quality of the mechanisms.

\subsection{A polynomial-time algorithm for equilibrium computation}\label{computation}
In \cite{EMN09}, Even Dar et. al. describe a natural process that
converges to a Nash equilibrium in every socially concave game. This
method is useful for computing Nash equilibrium of the all-pay and
winner-pay auctions. The process considered is known as
\emph{no-regret} dynamics. Informally, a buyer's update process is
said to have no-regret, if in the long-run, it attains an average
utility which is not significantly worse than that of the best fixed
action in hindsight (in the context of auctions, the best fixed
bid). Even Dar et. al. show that if every buyer uses an update process
with no-regret property, in a repeated socially concave game,
the joint average action profile converges to a Nash Equilibrium.
Many efficient algorithms for attaining the no-regret property (also
known as no-external-regret), exist~\cite{Zinkevich,Hazan1,Kalai}. In order to compute a Nash equilibrium of the
all-pay auction, and the winner-pay auction, one could simulate the
process of running a no-regret algorithm for every buyer that
participates in the auction. The rate at which the average vector of
bids converges to Nash equilibrium, depends on the vector $\lambda$,
which existence is guaranteed in property \ref{concave_in_sum}. In
particular, there exists no-regret algorithms (e.g.,
\cite{Zinkevich}), such that the rate of convergence to Nash
equilibrium, for the quasi-proportional mechanisms, is
$O(\frac{n}{\sqrt{t}}\frac{\sum_{j\in A}v_j}{v_{\min}})$, (I.e., at
time $t$ of the simulation process, the average bids vector is an
$\epsilon^t$-Nash equilibrium, where $\epsilon^t
=O(\frac{n}{\sqrt{t}}\frac{\sum_{j\in A}v_j}{v_{\min}})$. Algorithm
\ref{alg:computeNE} describes the simulation of running simultaneous
no-regret for every buyer, where the actual no-regret algorithm used
is GIGA~\cite{Zinkevich}.

\begin{algorithm}\caption{Algorithm for computing Nash equilibrium bids for
the quasi-proportional auction.}
\textbf{Input:} a vector $v=\{v_1,v_2,\ldots,v_n\}$.\\
\textbf{Output:} an $\epsilon$-Nash Equilibrium, $b_1,\ldots,b_n$.\\
\begin{algorithmic}
    \STATE $b^0 \leftarrow(1,1,\ldots,1)$
    \FOR{$t=1$ to $T=O(\frac{n}{\epsilon}\frac{\sum_{j\in A}v_j}{v_{\min}})$}
        \FORALL{ $i\in A$}
                \STATE $y_i^t \leftarrow b_i^{t-1} + \frac{1}{\sqrt{t}}\frac{\partial}{\partial b_i} u_i(b)$
                \IF {$y_i^t > v_i$}
                    \STATE {$b_i^t\leftarrow v_i$}
                \ELSE
                    \STATE{$b_i^t \leftarrow \max(y_i^t,0)$}
                \ENDIF
        \ENDFOR
    \ENDFOR
    \STATE {\bf return} $b^T$
\end{algorithmic}\label{alg:computeNE}
\end{algorithm}

%\begin{table}[here]\label{alg:computeNE}
%
% \abovedisplayskip=6pt
% \belowdisplayskip=6pt
%
%\hrule \vskip 6pt \textbf{Algorithm: ComputeNE\quad}
%
%\vskip 3pt \textbf{Input:} a vector $v=\{v_1,v_2,\ldots,v_n\}$.
%
%\vskip 3pt \textbf{Output:} an $\epsilon$-NE, $b_1,\ldots,b_n$.
%
%\begin{enumerate}%\label=(\roman*}
%    \item Set $b^0 \leftarrow(1,1\ldots,1)$
%    \item Set $t\leftarrow 1$
%    \item For $t=1$ to $T=O(\frac{n}{\epsilon}\frac{\sum_{j\in
%    A}v_j}{v_{\min}})$:
%    \begin{enumerate}%[label=\roman*]
%        \item For each $i\in A$
%            \begin{enumerate}
%                \item $y_i^t \leftarrow b_i^{t-1} + \frac{1}{\sqrt{t}}\frac{\partial}{\partial b_i} u_i(b)$
%                \item if $y_i^t > v_i$, then $b_i^t\leftarrow v_i$,
%                otherwise $b_i^t \leftarrow \max(y_i^t,0)$
%            \end{enumerate}
%
%       \item $t\leftarrow t+1$
%    \end{enumerate}
%    \item return $b^T$
%\end{enumerate}
%\hrule \caption{Algorithm for computing NE bids for the
%quai-proportional auction.}
%\end{table}

\subsection{Numerical Revenue Computation}
In this section, we present numerical results for the revenue of the
all-pay and winners-pay quasi-proportional auctions with different
weight functions and different number of buyers. Figures
\ref{fig1}-\ref{fig4} describe the revenue as a function of the
highest value  for the item, over all the bidders, denoted by $\alpha$.
Figure \ref{fig1} describes the revenue in an all-pay auction with
two bidders --- one bidder has a `high' value $\alpha \geq 1$, and
the other bidder has a value of 1. We consider two versions of the all-pay
auctions. In the first, we used a weight function
$w(z)=\sqrt{z}$, and in the second we used a weight function
$w(z)=z^{1\over 4}$.
Next, in Figure \ref{fig2}, we consider the same setting as in
Figure \ref{fig1}, for the winners-pay auction. The revenue in
equilibrium is presented for three different versions of the
winners-pay auction: The lowest curve describes the winner pay
auction with the linear weight function $w(z)=z$. The middle curve
describes the revenue when the weight function is $w(z)=\sqrt{z}$
and the upper curve describes the revenue when the weight function
is $w(z)=z^{1/4}$.

In Figures \ref{fig3}, and \ref{fig4} we study numerically the
revenue in a winners-pay auction when the number of bidders varies
from $n=2$ to $n=5$. The bidders' private values are such that a
single bidder has a high value $\alpha\geq 1$, and the other $n-1$
bidders have a low value of 1. Each curve in Figures
\ref{fig3},\ref{fig4} describes the revenue in equilibrium as a
function of $\alpha$, and each different curve corresponds to a
different number $n$ of bidders. Figure \ref{fig3} and \ref{fig4}
differ in the weight function used: in Figure \ref{fig3} we used
$w(z)=z$, and in Figure \ref{fig4} we used $w(z)=\sqrt{z}$. In
Theorem \ref{thm:manyplayers}, we show that the revenue in a
winners-pay auction, with values profile $(\alpha,1,1,\ldots,1)$
asymptotically goes to 0, as the number of bidders with value 1
tends to $\infty$. It is interesting however to notice that in both
Figures \ref{fig3} and \ref{fig4}, while the number of bidders is
kept relatively small, the revenue actually increases with the
number of low-value bidders.

\section{Concluding Remarks}

We study a natural class of quasi-proportional allocation mechanisms.
Combined with all-pay or winner-pay methods, this gives a simple
prior-free auction
mechanism without any reserve prices. Our analytical and experimental
study shows the revenue under various quasi-proportional functions in
equilibrium, and
we showed existence of a unique Nash equilibrium that can also be
computed efficiently. We believe quasi-proportional mechanisms will
find applications and a deeper
understanding of their properties will be useful.
%In addition, we
%list a few open directions.
%
%{\noindent \bf Constant-factor approximation for prior-free revenue
%maximization.}
An interesting open question is to design
an auction for a single item that achieves a total revenue of
constant factor of $\alpha=\max_i v_i$ in equilibria.
We proved that simple quasi-proportional mechanisms show promising revenue
properties in equilibria, however none of our mechanisms achieve a
constant approximation
factor of $\alpha$ (off by at least facor $\log \alpha$). A main open
problem is to design a mechanism for a single item that
achieves a constant factor of $\alpha$ in equilibria while not losing
much in the efficiency
of the allocation.
%
%\junk{
%
%{\noindent \bf Revenue Maximization for many buyers.}
Also as we discussed in Section~\ref{manybuyers}, the promising revenue properties
of quasi-proportional mechanisms for small number of buyers disappears as the
number of buyers tends to $\infty$. An interesting open question is to
modify the mechanism to ensure good revenue
properties when many buyers are in the system. A simple idea is that for any number of bidders, the auctioneer runs a quasi-proportional mechanism
among the highest two bids.
One hopes such mechanisms have good revenue properties,
however, we can show that such mechanisms may not admit any pure Nash
equilibria.
%}

\junk{

{\noindent \bf Revenue properties for small number of buyers.}
We analyzed the revenue for many buyers in Section~\ref{manybuyers} and reported
numerical results for the auctions in the presence of a small number
of buyers. However, we do not fully understand the tradeoff  between the
number of buyers and the gap between the maximum valuation and other valuations
for the revenue of quasi-proportional mechanisms. In particular, we do not
know the equilibria revenue  for large $n$ in settings where
$\alpha$ tends to $\infty$.
%Moreover, for more than two buyers,
%we only study equilibria of specific valuation vectors.
We leave a more complete study of revenue and efficiency of equilibria
for quasi-proportional mechanisms
as an interesting open question.
}

\junk{
{\bf \noindent Acknowledgements.} We thank Amos Fiat and Yishay Mansour for interesting
discussions, and pointing out related work.
}

\begin{figure}
\centering
\includegraphics[width=0.6\textwidth]{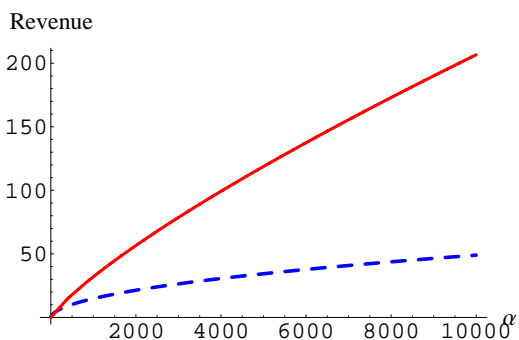}
\caption{\small{Revenue from equilibrium bids in an all-pay auction
with two bidders with values $\alpha$, and 1 respectively. The lower
curve describes an all-pay auction with weight function
$w(x)=\sqrt{x}$. The upper curve describes an all-pay auction with
weight function $w(x) = x^{1/4}$.}} \label{fig1}
\end{figure}

\begin{figure}
\centering
\includegraphics[width=0.6\textwidth]{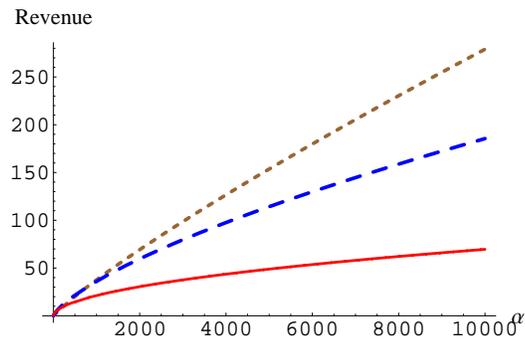}
\caption{\small{Revenue from equilibrium bids in a winners-pay
auction with two bidders with values $\alpha$, and 1 respectively.
The lower, middle, and upper  curves describes a winners-pay auction
with weight functions $w(x)=x$, $w(x)=\sqrt{x}$, and $w(x)=x^{1/4}$
respectively.}} \label{fig2}
\end{figure}

\junk{

%\iffalse
\begin{figure}[here]
  \begin{center}
\mbox{\includegraphics[width=0.5\textwidth]{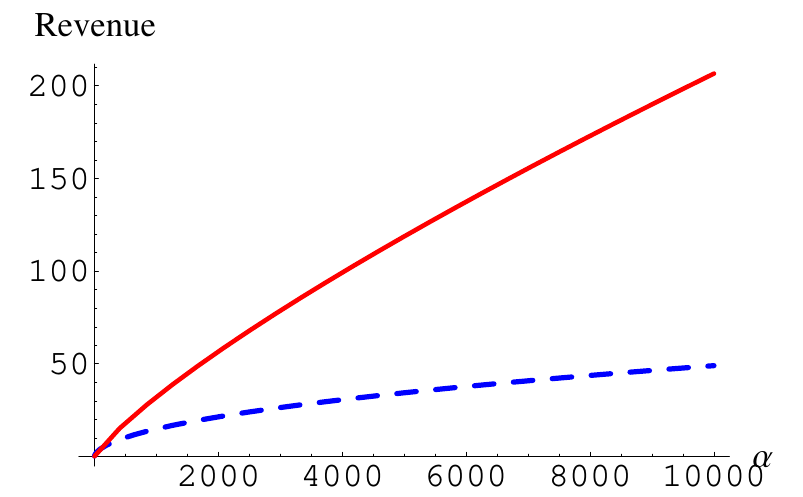}}
  \end{center}
 \hfill
\caption{\small{Revenue from equilibrium bids in an all-pay auction
with two bidders with values $\alpha$, and 1 respectively. The lower
curve describes an all-pay auction with weight function
$w(x)=\sqrt{x}$. The upper curve describes an all-pay auction with
weight function $w(x) = x^{1/4}$.}} \label{fig1}
\end{figure}

\begin{figure}[here]
  \begin{center}
\mbox{\includegraphics[width=0.5\textwidth]{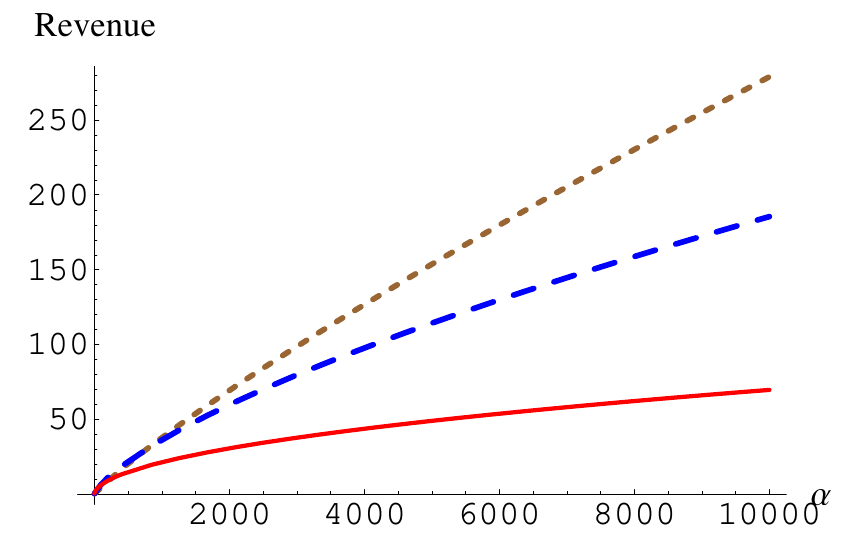}}
  \end{center}
 \caption{\small{Revenue from equilibrium bids in a winners-pay auction with two
bidders with values $\alpha$, and 1 respectively. The lower, middle,
and upper  curves describes a winners-pay auction with weight
functions $w(x)=x$, $w(x)=\sqrt{x}$, and $w(x)=x^{1/4}$
respectively.}} \label{fig2}
\end{figure}
}

\begin{figure}
\centering \centering
\includegraphics[width=0.6\textwidth]{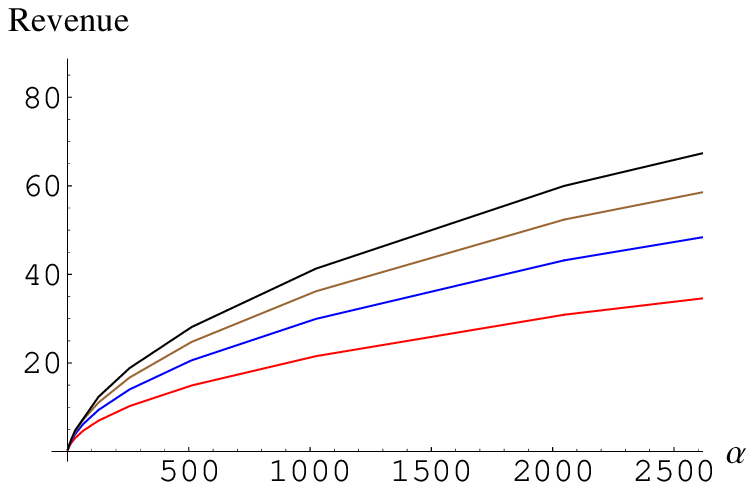}
\caption{\small{A winners-pay auction with weight $w(x)=x$, and
value profile $(\alpha,1,\ldots,1)$. The curves from lowest to
highest describe the revenue when the number of bidders with value 1
is 1,2,3,4 respectively}} \label{fig3}
\end{figure}

\begin{figure}
\centering
\includegraphics[width=0.6\textwidth]{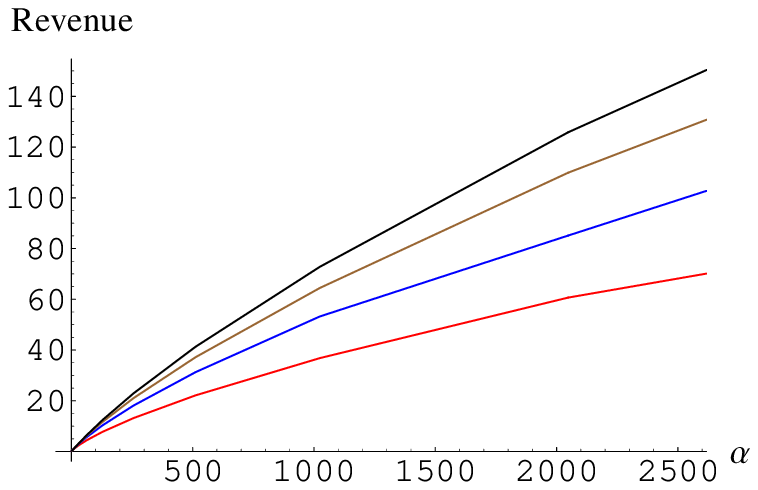}
\caption{\small{A winners-pay auction with weight $w(x)=\sqrt{x}$,
and value profile $(\alpha,1,\ldots,1)$. The curves from lowest to
highest describe the revenue when the number of bidders with value 1
is 1,2,3,4 respectively.}} \label{fig4}
\end{figure}

\junk{
\begin{figure}[here]
  \begin{center}
\mbox{\includegraphics[width=0.5\textwidth]{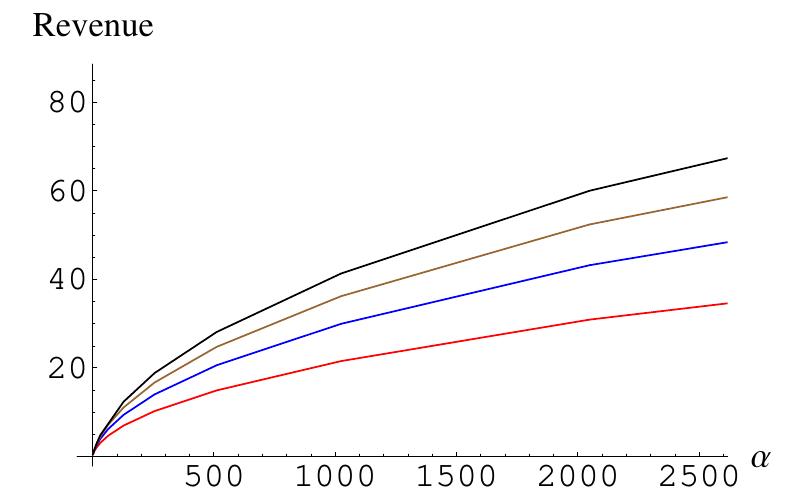}}
  \end{center}
\caption{\small{A winners-pay auction with weight $w(x)=x$, and value
profile $(\alpha,1,\ldots,1)$. The curves from lowest to highest
describe the revenue when the number of bidders with value 1 is
1,2,3,4 respectively}} \label{fig3}
\end{figure}

\begin{figure}[here]
  \begin{center}
\mbox{\includegraphics[width=0.5\textwidth]{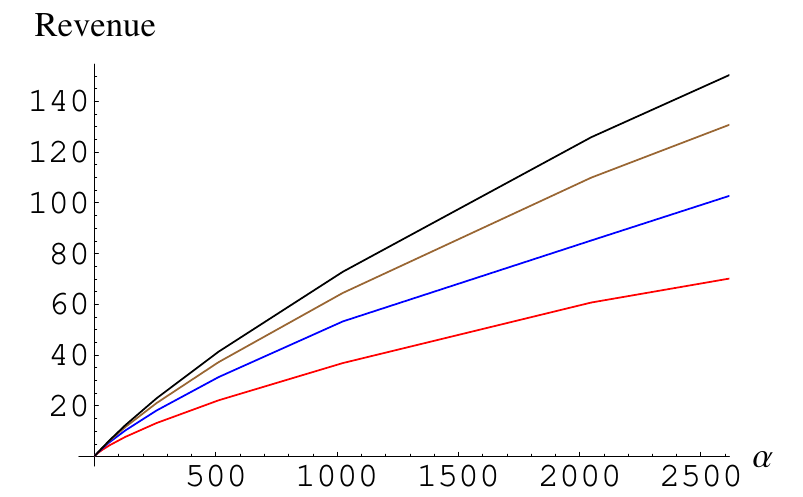}}
  \end{center}
\caption{\small{A winners-pay auction with weight $w(x)=\sqrt{x}$,
and value profile $(\alpha,1,\ldots,1)$. The curves from lowest to
highest describe the revenue when the number of bidders with value 1
is 1,2,3,4 respectively.}} \label{fig4}
\end{figure}
}

\newpage

\bibliographystyle{abbrv}

\newpage

\appendix
\section{Proof Sketch of Theorem~\ref{thm:sqrt}}
\begin{proof}
The proof follows a similar line to that of Theorem
\ref{clm:winnerpay_proportional}. Here, we give a proof
sketch. We start by taking the derivative
of the utility function,

\begin{eqnarray*}\frac{\partial}{\partial b_i}u_1(b_1,b_2) &=& -
\frac{v_1 -b_1}{2(\sqrt{b_1}+\sqrt{b_2})^2} +
\frac{v_1-b_1}{2(\sqrt{b_1}+\sqrt{b_2})\sqrt{b_1}} -
\frac{\sqrt{b_1}}{\sqrt{b_1}+\sqrt{b_2}}\\\frac{\partial}{\partial
b_i}u_2(b_1,b_2) &=& - \frac{v_2 -b_2}{2(\sqrt{b_1}+\sqrt{b_2})^2} +
\frac{v_2-b_2}{2(\sqrt{b_1}+\sqrt{b_2})\sqrt{b_2}} -
\frac{\sqrt{b_2}}{\sqrt{b_1}+\sqrt{b_2}}\end{eqnarray*}
Assigning $b_1\equiv t_1^2$ and $b_2\equiv t_2^2$, and re-ordering
we get that in equilibrium:

\begin{eqnarray*}
  t_1 = \frac{2t_2^3}{-v_1+3t_2^2} & t_2 = \frac{2t_1^3}{-v_2+3t_1^2}
\end{eqnarray*}

Assuming that the lower value is $v_1=1$, the proof continues by
showing a lower bound on $t_1$, which in return is used for showing
that $t_2 = O(v_2^{1/3})$.
\end{proof}

\section{Revenue for other functions.}\label{otherfunctions}
\begin{theorem}
For the function $w(x) = \log(x+1)$ and for both all-pay and winners-pay mechanisms, the revenue
of the equilibrium for two buyers with values $1,\alpha$ where $\alpha>>1$
is at most  $c\alpha\over \log^2\alpha$ for a constant $c$.
\end{theorem}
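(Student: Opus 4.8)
The plan is to handle the two mechanisms separately using the first-order equilibrium conditions \ref{eq1} and \ref{eq2} specialized to $w(x)=\log(x+1)$, for which $w'(x)=1/(x+1)$ and hence $1/w'(b_i)=b_i+1=e^{w(b_i)}$. Throughout I write $p=w(b_1)=\log(b_1+1)$ and $q=w(b_2)=\log(b_2+1)$ for the high- and low-value bidders ($v_1=\alpha$, $v_2=1$), so that $\sigma=p+q$, $b_1=e^p-1$, and $b_2=e^q-1$. The all-pay case turns out to be easy (the revenue is even $O(1)$), and the winners-pay case carries the real content.

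\textbf{All-pay.} Substituting into \ref{eq1}, the low bidder's condition $1=\sigma^2/\bigl(w'(b_2)\,w(b_1)\bigr)$ becomes $(p+q)^2 e^{q}=p$. Since $e^{q}\ge 1$ and $(p+q)^2\ge p^2$, this forces $p^2\le p$, i.e. $p\le 1$ and thus $b_1\le e-1$. The high bidder's condition then reads $\alpha=(p+q)^2 e^{p}/q\le e/q$, so $q\le e/\alpha\to 0$ and $b_2\to 0$. Hence the all-pay revenue $b_1+b_2$ is bounded by an absolute constant, which is trivially at most $c\alpha/\log^2\alpha$ for large $\alpha$.

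\textbf{Winners-pay.} Here \ref{eq2} gives the two conditions $\alpha=b_1+\frac{p(p+q)e^{p}}{q}$ and $1=b_2+\frac{q(p+q)e^{q}}{p}$. From the second, $b_2\le 1$ and hence $q\le\log 2$; rewriting it as $e^{q}\bigl(1+q+q^2/p\bigr)=2$ exhibits $p$ as a strictly increasing function $p=q^2/(2e^{-q}-1-q)$ of $q$ on $(0,q^{**})$, where $q^{**}<\log 2$ solves $2e^{-q}=1+q$. I would use this to trap $q$ in a constant interval: for $\alpha$ above an absolute constant $M$ one must have $p>1$ (otherwise the relation keeps $p/q$ bounded, so the term $\frac{p(p+q)e^{p}}{q}$ is $O(1)$ and the first condition forces $\alpha\le M$), and once $p\ge 1$ the monotonicity gives a constant lower bound $q\ge q_1>0$. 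Since the revenue $\frac{b_1 p+b_2 q}{p+q}$ is a convex combination of $b_1\ge b_2$, it is at most $b_1=e^{p}-1$, so it suffices to bound $b_1$. Dropping the nonnegative $b_1$ in the first condition gives $\alpha\ge p^2 e^{p}/q>p^2 e^{p}$ (using $p+q\ge p$ and $q<1$), hence $b_1\le e^{p}\le \alpha/p^2$; bounding the same condition from above via $q_1\le q\le\log 2$ gives $\alpha\le C\,p^2 e^{p}$ for a constant $C$, whence $p+2\log p\ge\log\alpha-O(1)$, and since $2\log p\le p$ we get $p\ge\tfrac13\log\alpha$ for large $\alpha$. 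Combining, $b_1\le \alpha/p^2\le 9\alpha/\log^2\alpha$, proving the theorem.

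The main obstacle is the winners-pay lower bound $p=\Omega(\log\alpha)$: it hinges on the coupled behaviour of $q$ through the low bidder's condition, in particular on ruling out the degenerate regime $q\to 0$ (which would make $1/q$ blow up in the high bidder's condition and destroy the estimate). Showing that large $\alpha$ forces $p>1$, and therefore $q$ bounded away from $0$, is the crux; once $q$ is confined to a constant interval $[q_1,q^{**}]$, the estimates on $b_1$ are routine.
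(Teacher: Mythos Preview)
Your argument is correct, but the winners-pay half is substantially more work than needed, and the ``main obstacle'' you single out is not actually an obstacle.

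You already derive, from the high bidder's condition and $q<1$, the inequality $\alpha>p^{2}e^{p}$. Since $p=\log(b_1+1)$ and $e^{p}=b_1+1$, this is exactly $(b_1+1)\log^{2}(b_1+1)\le\alpha$, which by itself gives $b_1\le c\alpha/\log^{2}\alpha$: either $b_1\le\sqrt{\alpha}$ (trivially $\le c\alpha/\log^{2}\alpha$ for large $\alpha$), or $b_1>\sqrt{\alpha}$, whence $p>\tfrac12\log\alpha$ and $b_1+1\le\alpha/p^{2}\le 4\alpha/\log^{2}\alpha$. This is precisely the paper's proof: it uses only the high bidder's first-order condition together with $b_2^{*}\le v_2=1$ (so $q\le\log 2<1$), and inverts $(b_1+1)\log^{2}(b_1+1)\le\alpha$ directly. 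The paper never touches the low bidder's condition, never needs a lower bound on $q$, and in fact treats both mechanisms by the same two-line argument.

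Your detour through the coupled relation $e^{q}(1+q+q^{2}/p)=2$, the constant interval $q\in[q_1,q^{**}]$, and the reverse inequality $\alpha\le Cp^{2}e^{p}$ is all unnecessary: you only needed the lower bound on $p$ because you tried to bound $b_1\le\alpha/p^{2}$ in terms of $\alpha$ alone, forgetting that $p$ is itself $\log(b_1+1)$.

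By contrast, your all-pay argument is genuinely different from, and sharper than, the paper's. The paper again uses the high bidder's condition to get $(b_1+1)\log^{2}(b_1+1)\le\alpha$ and hence only $b_1\le c\alpha/\log^{2}\alpha$. Your use of the \emph{low} bidder's condition $(p+q)^{2}e^{q}=p$ to force $p\le 1$ yields the much stronger bound $b_1+b_2=O(1)$ on the all-pay revenue (modulo the harmless slip $\alpha\le e/q$, which should be $\alpha\le 4e/q$).
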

\begin{proof}
Using Equation~\ref{eq1} for all-pay mechanism, we get that for $w(x) = \log(x+1)$,
$v_i={\sigma^2\over (\sigma-w(b_i))}w(b_i)$. Let the equilibrium be $(b^*_1, b^*_2)$.
Thus, for the first buyer,
$$\alpha = {(b^*_1+1)[\log(b^*_1+1)+\log(b^*_2+1)]^2\over \log(b^*_2+1)}.$$

Since, $b^*_2\le v_2=1$, we know that $\log(b^*_2+1)\le 1$, and thus
$$\alpha \ge {(b^*_1+1)\log^2(b^*_1+1)\over \log(b^*_2+1)}\ge (b^*_1+1)\log^2(b^*_1+1).$$
Therefore, for a constant $c$, $b^*_1\le {c\alpha\over \log^2 \alpha}$.

Similarly, for winners-pay mechanism, using Equation~\ref{eq2} for $w(x)=\log(x+1)$,
we observe that

$$\alpha = b^*_1 + {\log{b^*_1+1}[\log(b^*_1+1) + \log(b^*_2+1)](b^*_1+1)\over \log(b^*_2+1)}.$$
And since $b^*_2\le v_2=1$, we can easily show that $b^*_1\le {c\alpha\over \log^2 \alpha}$.
\end{proof}

\begin{remark}
Using function $w(x) = \log(\log(x+1)+1)$, one can prove an upper bound for
the revenue
of $\alpha\over \log(\alpha) \log\log^2(\alpha)$. More generally, let $w$ be the function
of $k$ consecutive application of $\log$ function (i.e, $w(x) =  \log(\ldots(\log(\log(x+1)+1)+1)+1)$),
be denoted by $w(x) = \log^{(k)}(x)$, then
we can show that the revenue is at most
$$\alpha\over \log(\alpha) \log\log(\alpha)
\ldots f^{(k-2)}(\alpha) f^{(k-2)}(\alpha) f^{(k)})^2 (\alpha).$$
%In fact, this revenue is less than the revenue extracted from mechanism
\end{remark}

\section{Proof Sketch of Theorem~\ref{thm:manyplayers}}
\begin{proof}
We analyze the winners-pay mechanism first. Because of the valuation vector,
we know that $b_i=b_j$ for $i,j\not =1$, we can set $B=b_1$ and
$b=b_i$ for $i\not = 1$. Therefore, using Equation~\ref{eq2} and the fact that
$\sigma = \sigma(b) = w(B) + (n-1)w(b)$, we get
\begin{eqnarray*}
\alpha & = & b_i+ {w(b_i) \sigma \over w'(b_i) (\sigma-w(b_i))} \\
1 & = & b + {w(b) (w(B) + (n-1)w(b)) \over w'(b_i) (w(B) + (n-1)w(b))}
\end{eqnarray*}

Substituting, $w(x)=x^\gamma$ and $w'(x) = \gamma x^{\gamma-1}$, as $n\rightarrow \infty$, the second equation above implies that
$b\cong {\gamma\over \gamma+1}$.

Using the above equations, by substituting one can show that as $\alpha$ is fixed and $n\rightarrow \infty$, $B=O(\alpha)$,
and thus the revenue from this mechanism is
$${w(B)B\over w(B) + (n-1)w(b)} + {w(B)B\over w(B)+ (n-1)w(b)} = \Theta(1).$$

Using equation~\ref{eq1} and by setting $B=b_1$ and $b=b_i$ for $i\not =1$,
one can perform similar computations and show that $b\cong {\gamma\over n}$,
and $B\cong {\alpha^{1\over \gamma} \over n}$ as $n\rightarrow \infty$.
As a result, the revenue of this mechanism tends to zero as $n$ tends to
$\infty$ and $\alpha$ is fixed.
\end{proof}

\end{document}